%
\documentclass[runningheads]{llncs}
\usepackage[T1]{fontenc}
%
\usepackage{graphicx}
\usepackage{amsmath,amssymb}
%
%

\newcommand{\F}{\mathbb{F}}
\newcommand{\N}{\mathbb{N}}
\newcommand{\PG}{\operatorname{PG}}
\newcommand{\cP}{\mathcal{P}}
\newcommand{\cH}{\mathcal{H}}
\newcommand{\cM}{\mathcal{M}}

\begin{document}
\title{Computer classification of linear codes based on lattice point enumeration and integer linear programming}
\titlerunning{Computer classification of linear codes}
%
\author{Sascha Kurz\orcidID{0000-0003-4597-2041}}
\authorrunning{S.~Kurz}
%
\institute{Friedrich-Alexander-University Erlangen-Nuremberg, Erlangen, Germany;\\ 
University of Bayreuth, Bayreuth, Germany,  
\email{sascha.kurz@uni-bayreuth.de}}
\maketitle              
\begin{abstract}
Linear codes play a central role in coding theory and have applications in several branches of mathematics. For error correction purposes the minimum Hamming distance 
should be as large as possible. Linear codes related to applications in Galois Geometry often require a certain divisibility of the occurring weights. In this paper  
we present an algorithmic framework for the classification of linear codes over finite fields with restricted sets of weights. The underlying algorithms are based on lattice 
point enumeration and integer linear programming. We present new enumeration and non-existence results for projective two-weight codes, divisible codes, and additive $\F_4$-codes.

\keywords{linear codes  \and classification \and enumeration \and lattice point enumeration \and integer linear programming \and two-weight codes.}
\end{abstract}

\section{Introduction}

A linear code of length $n$ is a $k$-dimensional linear subspace $C$ of the vector space $\F_q^n$, where $\F_q$ is the finite field with $q$ elements. The vectors in $C$ are called codewords. 
The weight $\operatorname{wt}(c)$ of a codeword $c=\left(c_1,\dots,c_n\right)\in C$ is the number of nonzero positions $\#\left\{ 1\le i\le n\,:\, c_i\neq 0\right\}$. For two codewords 
$c',c''\in C$ the Hamming distance is given by $d(c',c''):=\operatorname{wt}(c'-c'')$. With this, the minimum distance $d(C)$ is given by the minimum occurring Hamming distance between two different
codewords, i.e., $d(C):=\min\!\left\{d(c',c'')\,:\, c',c''\in C, c'\neq c''\right\}$. An $[n,k,d]_q$-code is a $k$-dimensional subspace $C$ of $\F_q^n$ with minimum distance at least $d$. 
We also speak of an $[n,k]_q$-code if we do not want to specify the minimum distance. Having applications in error correction in mind, a main problem in coding theory is the maximization of 
$d$, $k$, or $-n$ fixing the other two parameters. A famous problem in Galois Geometry is the maximum size of a partial $t$-spread in $\F_q^n$, where a partial $t$-spread is a set 
$\mathcal{T}$ of $t$-dimensional subspaces whose pairwise intersection is the zero vector. To such a set $\mathcal{T}$ we can associate an $[n,k]_q$-code $C$ whose weights are divisible by 
$\Delta:=q^{t-1}$, where $n=\left(q^n-1\right)/(q-1)-\#\mathcal{T}\cdot \left(q^t-1\right)/(q-1)$. So, non-existence results for $\Delta$-divisible codes imply upper bounds on the size 
of partial spreads. Using the fact that those codes have to be projective we remark that indeed all currently known upper bounds for partial spreads can be deduced from non-existence results 
for projective $\Delta$-divisible codes, see e.g\ \cite{honold2018partial} for details. Linear codes with few weights have applications in e.g.\ cryptography, designs, and secret sharing schemes. 
To sum up, there is a wide interest in the enumeration of linear $[n,k]_q$-codes with certain restrictions on the occurring weights.

Algorithms for the computer classification of linear codes date back at least to 1960 \cite{slepian1960some}, see also
\cite{betten2006error,bouyukliev2021computer,jaffe2000optimal,kaski2006classification,ostergaard2002classifying} for more recent literature. Here we want to focus on the approach using 
lattice point enumeration algorithms, see e.g.\ \cite{aardal2000solving,aardal2010lattice,schnorr1994lattice}, from \cite{bouyukliev2021computer}. We refine some of the algorithmic techniques 
and apply integer linear programming, see e.g.\ \cite{schrijver1998theory}.

The remaining part of this paper is structured as follows. We start introducing the preliminaries in Section~\ref{sec_preliminaries}. Since the geometric representation of linear codes as multisets of 
points in projective spaces plays a major role we briefly introduce this concept in Subsection~\ref{subsec_geometric_representation}. The main idea of extending linear codes using lattice point 
enumeration from \cite{bouyukliev2021computer} is briefly outlined in Subsection~\ref{subsec_extending}. The extension of each linear code is performed in two phases: In \texttt{Phase 1} lattice 
points of a certain Diophantine equation system are enumerated and in \texttt{Phase 2} additional checks are executed in order to reduce the number of extension candidates. In Section~\ref{sec_ILP} 
we state integer linear programming (ILP) formulations for some of those checks from \texttt{Phase 2} that are moved to \texttt{Phase 0} that is executed prior to \texttt{Phase 1}. Since 
we are interested in the practical performance of our proposed algorithm we discuss computational results in Section~\ref{sec_results}.   

\section{Preliminaries}
\label{sec_preliminaries}
A common representation of an $[n,k]_q$-code is as the row-span of a $k\times n$ matrix over $\F_q$ -- called generator matrix. As an example consider the 
$[56,6]_3$-code $C$ spanned by
$$
\left(\begin{smallmatrix}
1 0 0 0 0 0 2 2 1 1 0 1 0 0 1 1 0 2 0 2 1 1 1 1 0 0 1 0 1 2 0 1 0 2 1 2 1 1 1 1 1 2 2 0 0 1 2 0 0 2 0 1 2 2 1 1 \\
0 1 0 0 0 0 1 1 1 0 1 2 1 0 1 0 1 1 2 1 1 2 0 0 1 0 2 1 1 2 2 2 1 1 1 2 1 0 0 0 0 2 1 2 0 2 2 2 0 0 2 2 2 0 1 0 \\
0 0 1 0 0 0 2 2 2 2 0 2 2 1 0 2 0 0 1 1 2 0 0 1 0 1 1 2 0 0 2 0 2 0 2 0 0 2 1 1 1 2 2 1 2 1 1 2 2 2 0 0 1 1 1 2 \\
0 0 0 1 0 0 1 0 1 1 2 2 2 2 0 2 2 1 0 2 0 0 2 2 1 0 0 1 0 1 0 1 0 0 2 2 2 2 1 0 0 2 2 2 1 1 2 1 2 2 2 2 1 2 0 0 \\
0 0 0 0 1 0 2 0 1 2 1 0 2 2 1 1 2 1 1 2 0 0 1 0 2 1 1 0 2 2 1 1 1 2 1 0 0 0 0 2 1 2 0 2 2 2 0 2 1 2 2 0 1 0 0 1 \\
0 0 0 0 0 1 1 2 2 0 2 0 0 2 2 0 1 0 1 2 1 2 2 0 0 2 0 1 1 0 2 0 1 2 1 2 2 2 2 2 1 2 0 0 2 1 0 0 2 0 2 1 1 2 2 2 \\
\end{smallmatrix}\right)\!.
$$
It is the code of the famous Hill cap \cite{hill1973largest}. We say that a linear code is $\Delta$-divisible if the weights of all codewords are divisible by $\Delta$ and 
we speak of a $t$-weight code is $t$ different non-zero weights occur. Our example is a $9$-divisible $2$-weight code with weights $36$ and $45$. If a partial $3$-spread of 
$\F_3^8$ of size $248$ exists, then the set of uncovered points need to form a Hill cap, see \cite{honold2018partial} for the details.

\subsection{Geometric representation of linear codes}
\label{subsec_geometric_representation}
Permuting the columns of the stated generator matrix or multiplying arbitrary columns by arbitrary elements in $\F_q^*:=\F_q\backslash\{0\}$ yields an isomorphic linear code. We can factor 
out those symmetries by using the geometric representation of linear codes as multisets of points in a projective space $\PG(k-1,q)$. Here we just give a brief sketch of the most 
important facts and refer to e.g.\ \cite{dodunekov1998codes} for details. The $1$-dimensional subspaces of $\F_q^k$ are the points of $\PG(k-1,q)$ which we denote by $\cP_k$. A multiset 
of points $\cM$ in $\PG(k-1,q)$ is a mapping from $\cP_k$ to $\N$, i.e., to each point $P\in\cP_k$ we assign a point multiplicity $\cM(P)$. Starting from a generator matrix of a linear 
code we obtain a multiset of points by considering the multiset of one-dimensional subspaces spanned by the respective columns. In the other direction, we can take $\cM(P)$ arbitrary
generators for a point $P$ and place them at arbitrary positions in a generator matrix. Two linear codes are ismorphic iff their associated multisets of points are. We call $2$-dimensional 
subspaces lines. $(k-1)$-dimensional 
subspaces in $\PG(k-1,q)$ are called hyperplanes and there set is denoted by $\cH_k$. For each arbitrary non-zero codeword $c\in C$ also $\alpha\cdot c$ is a codeword for all 
$\alpha\in\F_q^*$ and we can associate $\F_q^*\cdot c$ with a hyperplane $H\in\cH_k$. The weight $\operatorname{wt}(c)$ of a codeword $c$ equals $\#\cM-\cM(H)$ for the associated hyperplane $H$, 
where $\#\cM:=\sum_{P\in\cP_k} \cM(P)$ and $\cM(H):= \sum_{P\in \cP_k\,:\, P\in H} \cM(P)$. The residual code of a non-zero codeword $c$ is the restriction $\cM|_H$ of $\cM$ to the corresponding 
hyperplane $H$. We say that a linear $[n,k]_q$-code $C$ is projective iff we have $\cM(P)\in\{0,1\}$ for all $P\in\cP_k$ for the corresponding multiset of points $\cM$.    

\subsection{Extending linear codes using lattice point enumeration}
\label{subsec_extending}
We say that a generator matrix $G\in\F_q^{k\times n}$ of an $[n,k]_q$-code $C$ is systematic if it is of the form $G=(I_k\vert R)$,  where $I_k$ is the $k\times k$ unit matrix and 
$R\in\F_q^{k\times (n-k)}$. Our general strategy to enumerate linear codes is to start from a systematic generator matrix $G$ of a code and to extend $G$ to a systematic generator matrix 
$G'$ of a {\lq\lq}larger{\rq\rq} code $C'$, c.f.\ \cite[Section III]{bouyukliev2021computer}. Here we assume the form
\begin{equation} \label{eq_generator_matrix}
  G'=\begin{pmatrix}
    I_k & 0\dots 0  & R \\
    0   & \underset{r}{\underbrace{1\dots 1}} & \star
  \end{pmatrix}
\end{equation}
where $G=\left(I_k|R\right)$ and $r\ge 1$, i.e., $C'$ is an $[n+r,k+1]_q$-code. Let $\cM$ and $\cM'$ be multisets of points corresponding to $C$ and $C'$, respectively. Geometrically, 
$\cM$ arises from $\cM'$ by projection through a point $P\in\cP_{k+1}$, i.e., for each line $L$ through $P$ we define $\cM(L/P)=\cM'(L)-\cM'(P)=\sum_{Q\in L\,:\, Q\neq P}\cM'(Q)$ 
and use $\PG(k,q)/P\cong \PG(k-1,q)$. Given the assumed shape of $G'$ we have $P=\left\langle e_{k+1}\right\rangle$ for the $(k+1)$th unit vector $e_{k+1}$ with $\cM'(P)=r$. 
However, we may choose any point $P\in\cP_{k+1}$ with $\cM'(P)\ge 1$ to construct $\cM$. While the linear code $C$ corresponding to $\cM$ may not admit a systematic generator matrix, there always 
is an isomorphic linear code which does. So, in general there are a lot of extensions ending up in a given code $C'$. In order to reduce the number of possible 
paths in \cite{bouyukliev2021computer} the authors speak of {\lq\lq}canonical length extension{\rq\rq} if
\begin{equation}
  \min\left\{ \cM'(Q)\,:\, \cM'(Q)>0, Q\in\cP_{k+1}\right\}=r \label{ie_min_extension}
\end{equation} 
is satisfied, c.f.\ \cite[Corollary 9]{bouyukliev2021computer}, i.e., the smallest possible value of $r$ is chosen.

The working horse for the algorithmic approach based on lattice point enumeration is \cite[Lemma 7]{bouyukliev2021computer}:
\begin{lemma}
  \label{lemma_ILP}
  Let $G$ be a systematic generator matrix of an $[n,k]_q$ code $C$ whose non-zero weights are contained in $\{i\Delta\,:a\le i\le b\}\subseteq \mathbb{N}_{\ge 1}$.
  By $c(P)$ we denote the number of columns of $G$ whose row span equals $P$ for all points $P\in\cP_k$ and set $c(\mathbf{0})=r$
  for some integer $r\ge 1$. Let $\mathcal{S}(G)$ be the set of feasible solutions of
  \begin{eqnarray}
    \Delta y_H+\sum_{P\in\mathcal{P}_{k+1}\,:\,P\le H} x_P =n-a\Delta&&\forall H\in\mathcal{H}_{k+1}\label{eq_hyperplane}\\
    \sum_{q\in\mathbb{F}_q} x_{\langle (u |q)\rangle } =c(\langle u\rangle ) && \forall \langle u\rangle \in\mathcal{P}_k \cup\{\mathbf{0}\} \label{eq_c_sum}\\
    x_{\langle e_i\rangle}\ge 1&&\forall 1\le i\le k+1\label{eq_systematic}\\
    x_P\in \mathbb N &&\forall P\in\mathcal{P}_{k+1}\label{point_var}\\
    y_H\in\{0,...,b-a\} && \forall H\in\mathcal{H}_{k+1}\label{hyperplane_var},
  \end{eqnarray}
  where $e_i$ denotes the $i$th unit vector in $\mathbb{F}_q^{k+1}$. Then, for every systematic generator matrix $G'$ of an $[n+r,k+1]_q$ code $C'$
  whose first $k$ rows coincide with $G$ and whose weights of its non-zero codewords are contained in $\{i\Delta\,:\, a\le i\le b\}$, we have a solution
  $(x,y)\in\mathcal{S}(G)$ such that $G'$ has exactly $x_P$ columns whose row span is equal
  to $P$ for each $P\in\mathcal{P}_{k+1}$.
\end{lemma}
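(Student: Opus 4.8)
The plan is to produce the required lattice point directly from the geometry of the given extension $G'$ and then to check, one constraint at a time, that it lies in $\mathcal{S}(G)$. Writing $\cM'$ for the multiset of points associated with $G'$, I would set $x_P := \cM'(P)$ for every $P \in \cP_{k+1}$, so that $x_P$ is literally the number of columns of $G'$ whose row span is $P$; this makes \eqref{point_var} automatic. For the slack variables I would use that each hyperplane $H \in \cH_{k+1}$ is the hyperplane associated with the projective class of a nonzero codeword $c_H \in C'$, and define $y_H := \operatorname{wt}(c_H)/\Delta - a$. Since by hypothesis every nonzero weight of $C'$ lies in $\{i\Delta : a \le i \le b\}$, the quantity $\operatorname{wt}(c_H)/\Delta$ is an integer in $\{a,\dots,b\}$, which gives $y_H \in \{0,\dots,b-a\}$ and settles \eqref{hyperplane_var}. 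Constraint \eqref{eq_systematic} is equally direct: a systematic $G'$ of the shape \eqref{eq_generator_matrix} contains each unit vector $e_1,\dots,e_{k+1}$ among its columns, so $x_{\langle e_i\rangle} \ge 1$.

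This leaves the two structural equations. For \eqref{eq_c_sum} the idea is that, because the first $k$ rows of $G'$ are those of $G$ (up to the inserted zero columns), projection through $P_0 = \langle e_{k+1}\rangle$ turns $\cM'$ back into $\cM$. Concretely, for a fixed base point $\langle u\rangle \in \cP_k$ the vectors $(u\mid q)$ with $q \in \F_q$ span the $q$ distinct points of $\cP_{k+1}$ lying over $\langle u\rangle$, and a column of $G'$ spans one of these points exactly when its first $k$ coordinates span $\langle u\rangle$; summing the corresponding multiplicities therefore reproduces the column count $c(\langle u\rangle)$ of $G$. The base point $\mathbf{0}$ is the single remaining fibre, consisting of the point $P_0$ alone, and records the $r$ columns equal to $e_{k+1}$, i.e.\ $x_{P_0} = c(\mathbf{0}) = r$. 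For \eqref{eq_hyperplane} I would invoke the weight formula from Subsection~\ref{subsec_geometric_representation}, namely $\operatorname{wt}(c_H) = \#\cM' - \cM'(H)$ with $\cM'(H) = \sum_{P \le H} x_P$. Substituting the definition of $y_H$ in the form $\operatorname{wt}(c_H) = (a + y_H)\Delta$ and rearranging turns this identity into precisely \eqref{eq_hyperplane}, with $y_H$ playing the role of the integer slack that records which admissible multiple of $\Delta$ the weight $\operatorname{wt}(c_H)$ attains.

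The step I expect to be the real work is the combinatorial dictionary behind \eqref{eq_c_sum}. One has to check that the $q$ lifts $\langle(u\mid q)\rangle$ are genuinely distinct points of $\cP_{k+1}$, that every point of $\cP_{k+1}$ lies over a unique base point so that no column of $G'$ is counted twice, and that the projection-through-$P_0$ identification matches the multiplicities encoded by $c(\cdot)$ rather than merely agreeing up to scalars or choice of representatives. The degenerate fibre over $\mathbf{0}$ also needs care, since there the symbolic sum $\sum_{q} x_{\langle(u\mid q)\rangle}$ must be read as the single multiplicity $x_{P_0}$. Once this bookkeeping is pinned down, the remaining verifications are routine substitutions into the weight--hyperplane relation and the fibre decomposition.
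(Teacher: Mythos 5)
Your construction is the intended one: the paper itself gives no proof of this lemma (it is quoted verbatim as \cite[Lemma 7]{bouyukliev2021computer}), and the remarks following it confirm exactly your reading, namely that $x_P$ models $\cM'(P)$ and $y_H$ parameterizes the weight of the codeword attached to the hyperplane $H$. Your verifications of (\ref{eq_c_sum}), (\ref{eq_systematic}), (\ref{point_var}) and (\ref{hyperplane_var}), including the careful treatment of the degenerate fibre over $\mathbf{0}$ and of representative-independence, are sound.

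The genuine problem is your final step for (\ref{eq_hyperplane}), where you assert that the weight identity rearranges into ``precisely'' that equation. It does not. Since $C'$ is an $[n+r,k+1]_q$ code we have $\#\cM'=n+r$, so substituting $\operatorname{wt}(c_H)=(a+y_H)\Delta$ into $\operatorname{wt}(c_H)=\#\cM'-\cM'(H)$ yields
\begin{equation*}
  \Delta y_H+\sum_{P\in\cP_{k+1}\,:\,P\le H} x_P \;=\; n+r-a\Delta,
\end{equation*}
whose right-hand side differs from the printed $n-a\Delta$ by $r\ge 1$. This is not a slip you can patch while keeping the statement as printed: for a hyperplane $H$ whose codeword has the minimum admissible weight $a\Delta$ one gets $\cM'(H)=n+r-a\Delta$, so (\ref{eq_hyperplane}) as printed would force $\Delta y_H=-r<0$. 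Concretely, take $C$ the even-weight $[3,2]_2$ code, $C'$ the even-weight $[4,3]_2$ code, $\Delta=2$, $a=1$, $b=2$, $r=1$: all hypotheses hold, yet the weight-$2$ codewords would require $y_H=-1/2$, so no admissible $(x,y)$ exists. Hence the right-hand side of (\ref{eq_hyperplane}) must read $n+r-a\Delta$ (evidently a transcription error relative to the source lemma, whose equation system is consistent), and what your argument actually establishes, once the bookkeeping is carried out honestly, is that corrected statement. A careful verification should have detected the mismatch and flagged the typo rather than claiming an exact match; as written, your concluding claim is false.
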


Our algorithmic strategy is to enumerate all lattice points satisfying constraints (\ref{eq_hyperplane})-(\ref{hyperplane_var}) in \texttt{Phase~1} and consider them as extension candidates $C'$ 
for a given linear $[n,k]_q$-code $C$, where additional checks may be applied, in \texttt{Phase~2}. Of course we have to deal with the problem of eliminating isomorphic copies. On the other hand, there
are there are some theoretic insights that allow to directly reject some of the lattice points as candidates in \texttt{Phase~2}, see \cite{bouyukliev2021computer} for details. 

For our purpose, a few remarks are in order. Equations~(\ref{eq_hyperplane}) ensure that $C'$ is $\Delta$-divisible with minimum weight at least $a\Delta$ and maximum weight at most $b\Delta$.  
Of course we may always choose $\Delta=1$, but the larger we choose $\Delta$ and the tighter we choose $a$, $b$ the less lattice points will satisfy constraints (\ref{eq_hyperplane})-(\ref{hyperplane_var}). 
Inequalities~(\ref{eq_c_sum}) and (\ref{eq_systematic}) model the assumed shape of (\ref{eq_generator_matrix}). (Technically, Inequalities (\ref{eq_systematic}) are removed in a
preprocessing step before calling a software for the enumeration of lattice points.) The variables $x_P$ model $\cM'(P)$ and the variables $y_H$ parameterize $\cM'(H)$ as detailed in (\ref{eq_hyperplane}). 
Actually, constraints (\ref{point_var}) and (\ref{hyperplane_var}) are just saying that we are only interested in lattice points, i.e., integral solutions.
 
Due to the availability of practically fast lattice point enumerations algorithms, the algorithmic strategy to generate many extension candidates in \texttt{Phase 1} and filter out suitable 
candidates afterwards in \texttt{Phase 2} turned out to be quite efficient if the number of constraints and variables is not too large, see \cite{bouyukliev2021computer}. It was also observed that 
considering just a subset of the constraints (\ref{eq_hyperplane}) can reduce computation times in many situations, i.e., generating more candidates in \texttt{Phase 1} can pay off if a simpler system 
allows faster generation of lattice points and the checks in \texttt{Phase 2} can be implemented efficiently.

In the implementation described in \cite{bouyukliev2021computer}, the check of condition (\ref{ie_min_extension}) as well as checks based on possible gaps in the assumed weight spectrum      
$\{i\Delta\,:a\le i\le b\}$ are moved to \texttt{Phase 2}. The idea of this paper is to demonstrate that it sometimes can pay off to move such checks to integer linear programming computations 
in a \texttt{Phase 0} prior to \texttt{Phase 1}.

\section{Enhancing the algorithmic approach via integer linear programming computations}
\label{sec_ILP}
Of course one can check the feasibility or infeasibility of an ILP using lattice point enumeration, after possibly transforming inequalities into equalities. In the other 
direction, many ILP solvers can also enumerate all lattice points of a polytope. However, in many situations ILP solvers can find a single feasible solution or show infeasibility faster 
than the full enumeration by a lattice point enumeration algorithm. However, the situation changes if one wants to enumerate a larger set of solutions exhaustively. 

So, a first idea is to check feasibility of constraints (\ref{eq_hyperplane})-(\ref{hyperplane_var}) using an ILP solver in a \texttt{Phase 0} prior to \texttt{Phase 1}. This pays off in those situations 
where some extension problems don't have a solution. We remark that several calls of ILP solvers for different random target functions can also be used in a heuristic approach if it is not necessary 
to classify all possible codes with certain parameters but just to find some examples.

The second idea is to move some of the checks of \texttt{Phase 2} into an initial feasibility test based on ILP computations. Starting with 
Inequality~(\ref{ie_min_extension}) we observe that it can be rewritten as

\begin{equation}
  x_P=0\quad\vee\quad x_P\ge r \label{ie_min_extension_2}
\end{equation}
  
for every point $P\in\cP_{k+1}$. Having an upper bound $x_P\le \Lambda_P$ at hand, which we have in most applications, we can linearize Inequality~(\ref{ie_min_extension_2}), as
\begin{equation} 
  x_P\le \Lambda_P \cdot u_P \quad\wedge\quad x_P\ge r\cdot u_P  \label{ie_min_extension_3}
\end{equation}
using an additional binary variable $u_P$.

Instead of solving a larger ILP some cases can also be eliminated in a simple preprocessing step. Suppose that an instance of Inequality~(\ref{eq_c_sum})
reads $$
  \sum_{i=1}^q x_{P_i} =c
$$
and we have $x_{P_i}\le\Lambda$ as well as $x_{P_i}=0\,\,\vee\,\, x_{P_i}\ge r$ for all $1\le i\le r$. If
\begin{equation}
  \left\lfloor\frac{c}{r}\right\rfloor < \left\lceil\frac{c}{\Lambda}\right\rceil,
\end{equation}
then no solution exists since at most $\left\lfloor\frac{c}{r}\right\rfloor$ variables $x_{P_i}$ have to be non-zero and at least $\left\lceil\frac{c}{\Lambda}\right\rceil$ variables 
$x_{P_i}$ have to be non-zero. We have observed the applicability of this criterion in practice for parameters $(q,r,\Lambda,c)=(2,3,4,5)$.

With respect to gaps in the weight spectrum we assume that the possible non-zero weights are contained in 
\begin{equation}
  \left\{a_1\Delta_1,\dots,b_1\Delta_1,a_2\Delta_2,\dots,b_2\Delta_2,\dots,a_l\Delta_l,\dots,b_l\Delta_l\right\},\label{weight_spectrum}
\end{equation}
where $l\ge 2$, $a_i<b_i$ for $1\le i\le l$, and $b_{i-1}\Delta_{i-1}<a_i\Delta_i$ for $2\le i\le l$. With this, we can replace 
Inequalities~(\ref{eq_hyperplane}) and Inequalities~(\ref{hyperplane_var}) by
\begin{eqnarray}
  \sum_{i=1}^l \Delta_i y_H^i+\!\!\!\!\sum_{P\in\mathcal{P}_{k+1}\,:\,P\le H}\!\!\!\! x_P =n-\sum_{i=1}^l a_i\Delta_i z_H^i \Delta && \forall H\in\mathcal{H}_{k+1}, \label{ie_add_first}\\ 
  y_H^i \le \left(b_i-a_i\right) z_H^i && \forall H\in\mathcal{H}_{k+1}, \forall 1\le i\le l, \\ 
  \sum_{i=1}^l z_H^i = 1 && \forall H\in\mathcal{H}_{k+1},\\ 
  z_H^i \in \{0,1\} && \forall H\in\mathcal{H}_{k+1}, \forall 1\le i\le l, \\
  y_H^i \in \N && \forall H\in\mathcal{H}_{k+1}, \forall 1\le i\le l \label{ie_add_last}. 
\end{eqnarray} 

The ILP for \texttt{Phase 0} consists of inequalities (\ref{eq_c_sum})-(\ref{point_var}) and (\ref{ie_add_first})-(\ref{ie_add_last}) if we 
have a possible weight spectrum as in (\ref{weight_spectrum}) with $l\ge 2$ or, alternatively, of inequalities (\ref{eq_hyperplane})-(\ref{hyperplane_var}).      
If $r\ge 2$, see (\ref{eq_generator_matrix}), then we additionally add the constraints (\ref{ie_min_extension_3}) and additional variables $u_P\in\{0,1\}$ for 
all $P\in\cP_{k+1}$.

\section{Computational results}
\label{sec_results}

In this section we want to present a few computational results that have been obtained with our refined algorithmic approach of computer classification of linear 
codes based on lattice point enumeration and integer linear programming. We start with non-existence results for projective two-weight codes, see e.g.\ 
\cite{brouwer2021two,calderbank1986geometry} for surveys. To this end, we slightly modify the notion of an $[n,k,d]_q$-code by replacing $d$ with a set of occurring
non-zero weights and writing $\le n$ if the length is at most $n$. For each projective two-weight code over $\F_q$ with characteristic $p$ there exist integers $u$ 
and $t$ such that the two occurring non-zero weights can be written as $u\cdot p^t$ and $(u+1)\cdot p^t$, i.e.\ the code is $p^t$-divisible, see \cite{delsarte1972weights}.
Under some mild extra conditions this is also true for non-projective two-weight codes \cite{kurz2024non}. 

\begin{proposition}
  \label{lemma_no_66_5_4_2wt_code}
  No projective $[66,5,\{48,56\}]_4$-code exists.
\end{proposition}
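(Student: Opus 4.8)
The plan is to recast the statement geometrically and then attack it with the extension machinery of Lemma~\ref{lemma_ILP}. A projective $[66,5,\{48,56\}]_4$-code corresponds to a set $\cM$ of $66$ points in $\PG(4,4)$ in which every hyperplane $H$ satisfies $\cM(H)=66-\operatorname{wt}(c)\in\{18,10\}$ for the associated codeword $c$, since the two weights are $48$ and $56$. As $\gcd(48,56)=8=2^3$ with $48=6\cdot 8$ and $56=7\cdot 8$, the code is $8$-divisible and fits the setting of Lemma~\ref{lemma_ILP} with $\Delta=8$, $a=6$, $b=7$. Before searching I would record why no shortcut is available: the standard counting forces exactly $h_{10}=66$ hyperplanes of value $10$ and $h_{18}=275$ of value $18$, and both the first and second moment equations, together with the parameters of the associated strongly regular graph on $4^5=1024$ vertices and their integrality and Krein conditions, turn out to be perfectly consistent. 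Hence no elementary obstruction exists and a genuine structural enumeration is required.

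Next I would reduce the dimension. Since $\cM$ is projective it has a point of multiplicity $1$; projecting $\cM$ through such a point $P$ yields, via $\PG(4,4)/P\cong\PG(3,4)$ as in Subsection~\ref{subsec_extending}, a $[65,4,\{48,56\}]_4$-code $C$ whose nonzero weights are again contained in $\{6\cdot 8,\,7\cdot 8\}$. Equivalently, restricting $\cM$ to a hyperplane of value $10$ produces a projective $10$-point residual in $\PG(3,4)$ all of whose plane sections lie in $\{0,2,4\}$, i.e.\ a $10$-cap, which serves as a convenient sanity check on the admissible dimension-$4$ structures. The first step of the search is therefore to classify all dimension-$4$ base codes that can arise by running \texttt{Phase~1} lattice point enumeration on the Diophantine system (\ref{eq_hyperplane})--(\ref{hyperplane_var}) with $\Delta=8$, $a=6$, $b=7$, followed by the \texttt{Phase~2} checks and isomorph rejection.

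With the finite list of admissible $[65,4,\{48,56\}]_4$-codes in hand, I would then test, for each of them, whether it extends to a projective $[66,5]$-code. Here the target is recovered as a canonical length extension with $r=1$: one sets up the ILP of Lemma~\ref{lemma_ILP} over $\PG(4,4)$ with $\Delta=8$, $a=6$, $b=7$, imposes projectivity through the upper bounds $x_P\le 1$ (which is exactly condition (\ref{ie_min_extension_3}) for $r=1$), and adds the weight-spectrum constraints (\ref{ie_add_first})--(\ref{ie_add_last}). Moving these checks into a \texttt{Phase~0} feasibility test, as advocated in Section~\ref{sec_ILP}, the claim follows once the solver certifies infeasibility for every base code: no extension means no projective $[66,5,\{48,56\}]_4$-code.

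The main obstacle is exhaustiveness rather than any single clever step. One must guarantee that the dimension-$4$ classification is complete up to isomorphism, that the branching over the choice of projection point and of $r$ covers all cases, and that the infeasibility certificates returned by the solver can be trusted. Because every layer of elementary counting stays consistent, the contradiction only surfaces in the full point-by-point system, so the lattice point enumeration and the ILP genuinely carry the argument; the practical difficulty is keeping the systems small enough to terminate, which is why exploiting the large divisor $\Delta=8$, restricting to a well-chosen subset of the hyperplane equations (\ref{eq_hyperplane}), and relocating the min-extension and gap checks to \texttt{Phase~0} are essential.
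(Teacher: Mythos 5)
Your proposal is correct and follows essentially the same route as the paper: reduce via projection through a multiplicity-one point to the existence of a $[65,4,\{48,56\}]_4$-code, and settle the question with the bottom-up lattice-point-enumeration/ILP extension and classification machinery. The only difference is that the paper's computation terminates one level earlier than you anticipate -- the three non-isomorphic $[\le 65,3,\{48,56\}]_4$-codes admit no extension to a $[65,4,\{48,56\}]_4$-code at all, so the list of dimension-$4$ codes you plan to test for extension to dimension $5$ is in fact empty.
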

\begin{proof}
  By exhaustive enumeration we have determined all three non-isomorphic $[\le 65,3,\{48,56\}]_4$-codes. 
  They have lengths $63$, $64$, $65$ and orders $362880$, $1728$, $36$ of their automorphism groups, respectively. None of them can be extended to an $[65,4,\{48,56\}]_4$-code, 
  so that no projective $[66,5,\{48,56\}]_4$-code exists.\hfill{$\square$}  
\end{proof}

\begin{proposition}
  \label{lemma_no_35_4_8_2wt_code}
  No projective $[35,4,\{28,32\}]_8$-code exists.
\end{proposition}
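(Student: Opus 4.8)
The plan is to mirror the proof of Proposition~\ref{lemma_no_66_5_4_2wt_code}, exploiting the residual-code/projection machinery from Subsection~\ref{subsec_geometric_representation} together with the lattice-point enumeration of Lemma~\ref{lemma_ILP}. First I would identify the divisibility structure of a hypothetical $[35,4,\{28,32\}]_8$-code. Over $\F_8$ the characteristic is $p=2$, and by the Delsarte result cited above the two weights of a projective two-weight code must be of the form $u\cdot 2^t$ and $(u+1)\cdot 2^t$. Here $28=7\cdot 4$ and $32=8\cdot 4$, so $t=2$, $\Delta=p^t=4$, $u=7$, giving a $4$-divisible code. Thus in the notation of Lemma~\ref{lemma_ILP} I would set $q=8$, $\Delta=4$, $a=7$, $b=8$, so that the admissible hyperplane multiplicities $\cM'(H)=n-\operatorname{wt}(c)$ for $n=35$ lie in $\{35-32,35-28\}=\{3,7\}$.

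Next I would pass to residual codes. Fixing any nonzero codeword $c$ and the associated hyperplane $H\in\cH_4$, the restriction $\cM|_H$ is a multiset of points in $\PG(2,8)$, i.e.\ the geometric representation of a shorter two-weight code in dimension $k-1=3$. The projective two-weight structure forces the residual to again have a restricted weight set, so the first real step is to enumerate exhaustively all projective $[\le 34,3,\{28,32\}]_8$-codes (or whatever length range the hyperplane multiplicities permit), exactly as the companion proposition enumerates the $[\le 65,3,\{48,56\}]_4$-codes. I would run the \texttt{Phase~0}/\texttt{Phase~1}/\texttt{Phase~2} pipeline of Section~\ref{sec_ILP}: feasibility screening by ILP, lattice-point generation for constraints~(\ref{eq_hyperplane})--(\ref{hyperplane_var}) with the parameters above, and isomorph rejection, tabulating the finitely many non-isomorphic dimension-$3$ codes together with their lengths and automorphism group orders.

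Then I would attempt the extension step: invoke Lemma~\ref{lemma_ILP} with each classified $[n,3,\{28,32\}]_8$-code $C$ as the base $G=(I_3\mid R)$ and check whether any lattice point $(x,y)\in\mathcal{S}(G)$ yields a systematic generator matrix of a $[35,4,\{28,32\}]_8$-code. Concretely, for an extension of the shape~(\ref{eq_generator_matrix}) adding $r$ columns to reach length $35$ and dimension $4$, I would verify via the lattice-point enumeration (or its ILP feasibility test in \texttt{Phase~0}) that no admissible completion exists for any base code and any compatible $r$. If every base code fails to extend, the hypothetical rank-$4$ code cannot exist, and one more extension to $\PG(3,8)\hookrightarrow\PG(4,8)$ is not even needed here because the claim is already at dimension $k=4$; contrasting with Proposition~\ref{lemma_no_66_5_4_2wt_code}, whose conclusion sits one dimension higher, the present argument terminates after a single extension layer.

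The main obstacle will be the sheer combinatorial size of the enumeration over $\F_8$: the number of points $\#\cP_4=(8^4-1)/(8-1)=585$ and hyperplanes $\#\cH_4=585$ make the Diophantine system~(\ref{eq_hyperplane})--(\ref{hyperplane_var}) much larger than in the $\F_4$ case, so both \texttt{Phase~1} lattice-point generation and isomorph rejection are computationally heavy. I expect the practical resolution to rely on the \texttt{Phase~0} ILP feasibility filtering described in Section~\ref{sec_ILP}---in particular the linearized minimality condition~(\ref{ie_min_extension_3}) and the weight-gap reformulation~(\ref{ie_add_first})--(\ref{ie_add_last})---to cut down the candidate set before the expensive exhaustive enumeration, rather than on any new theoretical ingredient.
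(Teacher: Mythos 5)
Your overall plan coincides with the paper's proof: classify the dimension-$3$ codes with weights in $\{28,32\}$ and length at most $34$, then verify computationally that none of them extends to a $[35,4,\{28,32\}]_8$-code. This is exactly what the paper does: it determines the unique $[\le 34,3,\{28,32\}]_8$-code (it has length $34$ and an automorphism group of order $43008$) and then shows by an ILP computation (5.6 hours, about $1.6$ million branch-and-bound nodes) that no extension exists. Your parameter identification ($\Delta=4$, $a=7$, $b=8$, hyperplane multiplicities in $\{3,7\}$) and your observation that a single extension layer suffices here, in contrast to Proposition~\ref{lemma_no_66_5_4_2wt_code}, are both correct.

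There is, however, one genuine flaw: you restrict the intermediate enumeration to \emph{projective} $[\le 34,3,\{28,32\}]_8$-codes. Projectivity is not inherited by projections. If $\cM'$ is the multiset of a hypothetical projective $[35,4,\{28,32\}]_8$-code and you project through a point $P$ with $\cM'(P)=1$, the multiplicity of an image point $L/P$ equals $\cM'(L)-1$, where $L$ is a line through $P$; since a line of $\PG(3,8)$ can carry up to $3$ points of $\cM'$ (from $\sum_{H\supseteq L}\cM'(H)=35+8\cdot\cM'(L)\le 9\cdot 7$ one gets $\cM'(L)\le 3$), the projected multiset may have points of multiplicity $2$. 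So the base classification must cover all $[\le 34,3,\{28,32\}]_8$-codes with point multiplicity up to $2$, not only the projective ones; note that the paper's Proposition~\ref{lemma_no_66_5_4_2wt_code} likewise imposes no projectivity on its dimension-$3$ codes. As written, if the unique such code happened to be non-projective, your enumeration would return nothing and your non-existence conclusion would not follow. Dropping the word ``projective'' from the intermediate classification (and instead bounding the point multiplicity by $2$) repairs the argument and makes it identical to the paper's.
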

\begin{proof}
  We have computationally determined the unique $[\le 34,3,\{28,32\}]_8$-code. 
  It has length $34$ and an automorphism group of order $43008$. We have computationally checked that no extension exists. The corresponding ILP computation 
  took 5.6~hours of computation time and checked 1\,633\,887 B\&B-nodes.\hfill{$\square$}
\end{proof}

We remark that we have also enumerated all $[\le 122,3,\{108,117\}]_9$-codes (with maximum point multiplicity $9$). All of these $1147$ non-isomorphic 
codes have length $n=122$. It is an interesting open question whether one of these can be extended to a projective $[123,4,\{108,117\}]_9$-code, which 
is currently unknown. Applying the so-called subfield construction, see e.g.\ \cite{calderbank1986geometry}, would also yield a projective 
$[492,8,\{324,351\}]_3$-code -- again currently unknown.

On the way to enumerate all projective $[77,5,\{56, 64\}]_4$-codes we need to consider the extension of $[74,3,\{56, 64\}]_4$- to $[76,4,\{56, 64\}]_4$-codes. 
Without the application of \texttt{Phase 0} exactly 1\,087\,803 linear codes are constructed from lattice points and eliminated by Inequality~(\ref{ie_min_extension})
afterwards. We remark that there are exactly $5$ nonisomorphic $[76,4,\{56, 64\}]_4$-codes.      

\medskip

Next we consider projective $\Delta$-divisible codes. The characterization of all possible 
length $n$ such that there exists a projective $8$-divisible $[n,k]_2$-code for some $k$ 
was completed in \cite{honold2019lengths}. For projective $5$-divisible codes over $\F_5$ the 
only undecided length is $n=40$, see e.g.\ \cite{kurz2021divisible} for a survey.
\begin{proposition}
  No projective $5$-divisible $[40,4]_5$-code exists.
\end{proposition}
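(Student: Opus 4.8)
The plan is to prove non-existence of a projective $5$-divisible $[40,4]_5$-code by the same extension strategy used in the preceding propositions, turning the problem into a finite computation governed by Lemma~\ref{lemma_ILP}. First I would fix the divisibility parameter $\Delta=5$ and translate the target parameters into the framework of the lemma: a projective $5$-divisible $[40,4]_5$-code corresponds to a multiset of points $\cM'$ in $\PG(3,5)$ with $\#\cM'=40$, with $\cM'(P)\in\{0,1\}$ for all $P\in\cP_4$ (projectivity), and with all hyperplane values $\cM'(H)$ forcing weights $40-\cM'(H)$ divisible by $5$. Since the full weight range of such a code lies between a minimum weight $a\Delta$ and a maximum weight $b\Delta$, I would first pin down the admissible interval $[a,b]$ for the weights divided by $5$; the standard counting (Griesmer-type bounds together with the divisibility constraint) restricts which weights $5i$ can occur in a length-$40$ code, and this choice of $a$ and $b$ is what makes the constraint system (\ref{eq_hyperplane})--(\ref{hyperplane_var}) tight enough to be computationally tractable.

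Next I would realise the $[40,4]_5$-code as the last step of an extension chain, building it up dimension by dimension exactly as in (\ref{eq_generator_matrix}). Concretely, I would exhaustively classify all projective $5$-divisible codes of dimension $3$ and length at most $39$ with weights in the admissible set, eliminating isomorphic copies as described in Subsection~\ref{subsec_geometric_representation}. For each such residual $[\le 39,3]_5$-code $C$, Lemma~\ref{lemma_ILP} guarantees that any extension to a projective $5$-divisible $[40,4]_5$-code appears as a lattice point of the system (\ref{eq_hyperplane})--(\ref{hyperplane_var}) (with the projectivity bound $\Lambda_P=1$ enforcing $x_P\in\{0,1\}$). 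I would then run the \texttt{Phase 0} feasibility test: feed the constraint system, augmented where $r\ge 2$ by the canonical-extension linearisation (\ref{ie_min_extension_3}), to an ILP solver and ask only whether a single feasible $(x,y)$ exists. If the solver reports infeasibility for every residual $C$ on the list, no extension can exist, and hence no projective $5$-divisible $[40,4]_5$-code exists.

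The main obstacle I anticipate is twofold. First, one must be certain that the classification of the dimension-$3$ residuals is genuinely \emph{complete} and isomorphism-free, since a single missing residual would invalidate the non-existence conclusion; this requires the canonical-length-extension condition (\ref{ie_min_extension}) to be handled carefully so that each code is reached exactly once and none is dropped. Second, the underlying ILP feasibility problems can be combinatorially hard: the number of point variables grows as $\#\cP_4=(5^4-1)/(5-1)=156$ and the number of hyperplane constraints is equally large, so the branch-and-bound search may be expensive, as the running times reported for the analogous cases in Proposition~\ref{lemma_no_35_4_8_2wt_code} already suggest. The payoff of moving the divisibility and minimum-extension checks into \texttt{Phase 0}, rather than filtering lattice points afterwards in \texttt{Phase 2}, is precisely that an ILP solver can certify infeasibility far faster than enumerating and rejecting candidates one by one.
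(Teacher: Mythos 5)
There is a genuine gap in your plan: you classify the wrong set of dimension-$3$ codes. You restrict the residual $[\le 39,3]_5$-codes to be \emph{projective}, but projectivity is not inherited by the residual/projection. If $\cM'$ is the point set of a projective $5$-divisible $[40,4]_5$-code and you project through a point $P$ of multiplicity $1$, the multiplicity of an image point $L/P$ equals the number of points of $\cM'$ on the line $L$ other than $P$; since a line of $\PG(3,5)$ has six points, this can be as large as $5$. So the codes that must be extended are $5$-divisible $[39,3]_5$-codes with point multiplicities up to $5$, almost all of which are non-projective. Your list contains only the projective ones, so proving ILP infeasibility for every code on it says nothing about the existence of the $[40,4]_5$-code. (A side remark: since the target code is projective, every point of $\cM'$ has multiplicity $1$, so the canonical extension length is forced to be $r=1$ and only residual length exactly $39$ occurs; your provision for $r\ge 2$ via (\ref{ie_min_extension_3}) is vacuous here.)

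The paper's proof handles exactly this issue in two parts. It enumerates all $371$ non-isomorphic $5$-divisible $[39,3]_5$-codes with maximum point multiplicity $4$ and checks that none extends to a projective $5$-divisible $[40,4]_5$-code. The remaining case, a residual point of multiplicity $5$, is excluded by a theoretical argument: such a point means all six points of some line lie in the support of the $[40,4]_5$-code (a two-dimensional simplex code, geometrically a line). Removing that line leaves a projective multiset of $34$ points which is again $5$-divisible, because every hyperplane meets a line in $1$ or $6$ points, i.e., in $1\pmod 5$ points, while $\cM'(H)\equiv 0\pmod 5$ for all hyperplanes $H$. This contradicts the known non-existence of projective $5$-divisible $[34,k]_5$-codes \cite[Lemma 7.12]{kurz2021divisible}. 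Your proposal contains no substitute for this step; you would need either to enlarge your enumeration to point multiplicities up to $5$, or to enumerate up to multiplicity $4$ and supply an argument of this kind for the multiplicity-$5$ case.
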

\begin{proof}
  By exhaustive enumeration we have determined all $371$ non-isomorphic $5$-divisible $[39,3]_5$-codes 
  with maximum point multiplicity $4$. None of them can be extended to a projective $5$-divisible $[40,4]_5$-code. 
  Note that extending a $[39,3]_5$-code with maximum point multiplicity to a projective $[40,4]_5$-code would imply 
  that the resulting code contains a two-dimensional simplex code, in geometrical terms a line, in its support. 
  However, no projective $5$-divisible $[34,k]_5$-code exists \cite[Lemma 7.12]{kurz2021divisible}.\hfill{$\square$} 
\end{proof}

\medskip

As a last example we consider additive codes over $\F_4$. In general, each $k$-dimensional additive code of length $n$ with minimum
Hamming distance $d$ over $\F_{q^2}$ geometrically corresponds to a multiset of lines in $\PG(2k-1, q)$ such that
each hyperplanes contains at most $n-d$ lines, see e.g.\ \cite{blokhuis2004small}. For $\F_4$ the case $k=3.5$ was partially 
studied in \cite{guan2023some}. Our aim is to construct examples for $(n,d)=(51,38)$ which are special $[153,7,76]_2$-codes. 
\begin{lemma} 
  Let $\cM$ be the multiset of points in $\PG(6,2)$ formed by $51$ lines such that every hyperplane contains at most $13$ lines and $C$ denote the 
  corresponding binary code. Then, $C$ is $[153,7,76]_2$-code with $\operatorname{wt}(c)\le 102$ for every $c\in C$.    
\end{lemma}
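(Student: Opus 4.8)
The plan is to translate the geometric hypotheses directly into statements about the weights of the associated binary code $C$, using the dictionary between codewords and hyperplanes established in Subsection~\ref{subsec_geometric_representation}. First I would verify the basic parameters. The multiset $\cM$ consists of $51$ lines in $\PG(6,2)$, and each line of $\PG(6,2)$ contains exactly $q+1 = 3$ points. Since $C$ is the binary code whose geometric representation is $\cM$, the length equals $\#\cM = 51\cdot 3 = 153$ and the dimension equals $6+1 = 7$, giving a $[153,7]_2$-code. This handles the length and dimension immediately; the substance lies entirely in pinning down the weight range.

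Next I would convert the hyperplane condition into weight bounds via the formula $\operatorname{wt}(c) = \#\cM - \cM(H)$ for the hyperplane $H \in \cH_7$ associated to the nonzero codeword $c$. The key counting step is to determine, for a line $\ell$ of $\PG(6,2)$ and a hyperplane $H$, how many of the three points of $\ell$ lie in $H$. A line either lies entirely inside $H$ (contributing $3$ points) or meets $H$ in exactly one point (contributing $1$ point), since a $2$-dimensional subspace intersects a $6$-dimensional subspace of $\F_2^7$ in a subspace of dimension at least $1$. Writing $j$ for the number of the $51$ lines fully contained in $H$, the point count is $\cM(H) = 3j + 1\cdot(51-j) = 51 + 2j$. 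The hypothesis that every hyperplane contains at most $13$ lines means $0 \le j \le 13$, so $51 \le \cM(H) \le 51 + 2\cdot 13 = 77$. Substituting into the weight formula gives
\begin{equation}
  \operatorname{wt}(c) = 153 - \cM(H) \in [153-77,\, 153-51] = [76, 102]
\end{equation}
for every nonzero $c \in C$. Thus the minimum nonzero weight is at least $76$ and the maximum weight is at most $102$, which is exactly the claim $\operatorname{wt}(c) \le 102$ together with $d = 76$; the weight $0$ codeword is handled separately and does not affect the minimum distance.

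The main obstacle, such as it is, is the line-hyperplane intersection analysis, since this is the only place where the specific geometry of $\PG(6,2)$ enters. I would argue that a $2$-subspace and a $6$-subspace of the $7$-dimensional space $\F_2^7$ intersect in dimension $\dim = 2 + 6 - \dim(\text{span}) \ge 2+6-7 = 1$, forcing either $\dim 1$ (a single projective point, i.e.\ the line meets $H$ but is not contained) or $\dim 2$ (the line lies in $H$); the count of three total points then yields the $3j+(51-j)$ decomposition cleanly because a line not in $H$ meets it in exactly one point. Once this dichotomy is established the rest is the elementary substitution above. The one subtlety to state carefully is that the minimum distance is \emph{exactly} $76$ rather than merely $\ge 76$: this requires that the lower bound $j=13$ is attained by some hyperplane, which is guaranteed precisely when $\cM$ genuinely realizes the extremal value $13$ for at least one $H$; I would note that the construction producing such an $\cM$ (the examples being sought for $(n,d)=(51,38)$) ensures this, so the code has true minimum distance $76$.
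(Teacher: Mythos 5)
Your proof is correct and follows essentially the same route as the paper: replace each line by its $3$ points, observe that a hyperplane containing $j\le 13$ of the $51$ lines satisfies $\cM(H)=51+2j\in[51,77]$, and conclude $76\le\operatorname{wt}(c)\le 102$ via $\operatorname{wt}(c)=\#\cM-\cM(H)$ (you just spell out the line--hyperplane intersection dichotomy that the paper leaves implicit). Your closing worry about the minimum distance being \emph{exactly} $76$ is moot, since the paper defines an $[n,k,d]_q$-code as one with minimum distance \emph{at least} $d$, so no attainment argument is needed (and your appeal to "the construction ensures this" would not follow from the stated hypothesis anyway).
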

\begin{proof}
  Replacing the $51$ lines by their $3$ points yields a multiset $\cM$ in $\PG(6,2)$ with cardinality $153$. Since each hyperplane $H$ contains between $0$ and $13$ lines, 
  we have $51\le\cM(H)\le 77$, so that $76\le \operatorname{wt}(c)\le 102$ for all $c\in C\backslash\{\mathbf{0}\}$.\hfill{$\square$}  
\end{proof}

\begin{lemma}
  No projective $2$-divisible $[65,6,32]_2$-code with maximum point multiplicity exists.
\end{lemma}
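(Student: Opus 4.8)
The plan is to read this as another instance of the extension framework of Lemma~\ref{lemma_ILP}, specialized to $q=2$, $\Delta=2$, and the geometry of $\PG(5,2)$. First I would fix the geometric picture. Such a code corresponds to a multiset $\cM$ of points in $\PG(5,2)$ with $\#\cM=65$, and $2$-divisibility together with $d=32$ forces, for every hyperplane $H\in\cH_6$, that $\cM(H)=65-\operatorname{wt}$ is odd with $1\le\cM(H)\le 33$ (the value $\cM(H)=0$ would produce the odd weight $65$). Since $\PG(5,2)$ has only $63$ points while $\#\cM=65$, the configuration cannot be genuinely projective, so the relevant constraint is the bound on the maximum point multiplicity that a residual inherits from the ambient configuration of $51$ lines treated above (a point lies on only few lines). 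This bound furnishes the $\Lambda_P$ needed to activate the linearization (\ref{ie_min_extension_3}) and the simple pigeonhole criterion of Section~\ref{sec_ILP}. The motivation for exactly these parameters is that a hyperplane of $\PG(6,2)$ carrying $7$ of the $51$ lines yields a residual of length $51+2\cdot 7=65$, so ruling out the $[65,6,32]_2$-code is a pruning step for the $[153,7,76]_2$-problem.

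Next I would check that a purely global counting argument does not close the case, so that the computation is genuinely needed. The first moment gives $\sum_{H}\cM(H)=31\cdot 65=2015$, which together with $\cM(H)\in\{1,3,\dots,33\}$ pins the hyperplane spectrum down to essentially $31$ hyperplanes with $\cM(H)=33$ and $32$ with $\cM(H)=31$; the second moment then forces $\sum_{P}\cM(P)^2=71$, i.e.\ an excess $\sum_P\cM(P)(\cM(P)-1)=6$ of multiplicity over a projective set. All of this is satisfiable by some multiplicity vector, so averaging alone cannot yield a contradiction. The actual proof step is therefore algorithmic: exhaustively classify all $2$-divisible $[\le 64,5]_2$-codes with the admissible weight set $\{32,34,\dots,64\}$ and the prescribed maximum point multiplicity (a dimension-$5$ enumeration, well within reach), and then attempt the extension to $[65,6,32]_2$ via \texttt{Phase 1}, preceded by the \texttt{Phase 0} feasibility test built from (\ref{eq_c_sum})--(\ref{point_var}), (\ref{eq_hyperplane})--(\ref{hyperplane_var}) and the min-extension linearization (\ref{ie_min_extension_3}); I expect this ILP to be infeasible (equivalently, an empty \texttt{Phase 2}), which is exactly the non-existence claimed.

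The hard part will be controlling the size of the search so that infeasibility is obtained cheaply. In $\PG(5,2)$ one already has $63$ point variables and $63$ hyperplane constraints, and the weight window $\{32,\dots,64\}$ is wide, so without the extra divisibility and multiplicity restrictions the lattice-point system (\ref{eq_hyperplane})--(\ref{hyperplane_var}) admits far too many solutions to enumerate. Thus the crux is to derive a tight multiplicity bound $\Lambda_P$ from the ``at most $13$ lines per hyperplane'' condition (propagated to the residual) and to let the \texttt{Phase 0} ILP, rather than a closed-form identity, rule out the remaining candidates; as in the preceding proofs, one should be prepared for a nontrivial branch-and-bound effort, and possibly for combining this with a non-existence result for a shorter $2$-divisible code obtained by iterating the residual construction.
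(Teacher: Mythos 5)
Your plan is essentially the paper's own proof: the paper disposes of this lemma with the single line ``By exhaustive enumeration,'' i.e.\ a computational classification, and your proposal—classify the $2$-divisible $[\le 64,5]_2$-codes with even weights in $\{32,\dots,64\}$ and point multiplicity at most $2$, then show via the \texttt{Phase 0}/\texttt{Phase 1} machinery of Lemma~\ref{lemma_ILP} that none extends to a $[65,6,32]_2$-code—is exactly the intended instantiation of that framework. Your preliminary observations (that ``projective'' in the statement must be read as maximum point multiplicity $2$ since $65>63$, and that simple first/second-moment counting cannot settle the case, which is consistent with the paper's remark that $[65,6,32]_2$-codes with a triple point do exist) are also correct.
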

\begin{proof}
  By exhaustive enumeration.\hfill{$\square$}
\end{proof}
We remark that there exists a unique $[63,6,32]_2$-code $C$ which is projective and $32$-divisible. If $\cM$ denotes the corresponding multiset of 
points, then adding two arbitrary points yields $[65,6,32]_2$-codes.

\begin{lemma}
  \label{weight_restrictions}
  Let $C$ be a $[153,7,76]_2$-code with $\operatorname{wt}(c)\le 102$ for every $c\in C$. Then, the occurring non-zero weights are contained in $\{76,80,92,96,100\}$.
\end{lemma}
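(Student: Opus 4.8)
The plan is to combine the already-established global constraints on the code $C$ with the local divisibility structure coming from the additive/line origin of $\cM$. We know from the hypothesis that $C$ is a $[153,7,76]_2$-code with $76\le \operatorname{wt}(c)\le 102$ for every nonzero codeword. In geometric language, each nonzero codeword corresponds to a hyperplane $H\in\cH_7$ with $\operatorname{wt}(c)=153-\cM(H)$, so the weight bounds translate directly into the hyperplane-multiplicity bounds $51\le\cM(H)\le 77$. The key structural fact, which I would invoke first, is that $\cM$ is built from $51$ lines of $\PG(6,2)$, each contributing its $3$ points; hence for any hyperplane $H$, the contribution of a single line to $\cM(H)$ is either $1$ (if the line meets $H$ in a single point) or $3$ (if the line lies entirely in $H$). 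Writing $j$ for the number of lines contained in $H$, we obtain $\cM(H)=3j+(51-j)=51+2j$, so every hyperplane multiplicity is congruent to $51\equiv 1\pmod 2$ and, more usefully, lies in the arithmetic progression $\{51,53,55,\dots,77\}$.

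Next I would translate this back into weights. Since $\operatorname{wt}(c)=153-\cM(H)=153-51-2j=102-2j$ with $0\le j\le 13$, the a priori possible nonzero weights form the even values $\{76,78,80,\dots,102\}$. This already cuts the spectrum to even numbers in the correct range, but it is far from the claimed five-element set $\{76,80,92,96,100\}$, so a finer divisibility argument is needed. The natural tool is to exploit that $\cM$ is genuinely a multiset of lines, not merely an even-weight configuration: the residual structure when projecting or restricting should force a $4$-divisibility-type condition on most weights. Concretely, I would examine the second-order weight relations, i.e.\ count incidences of pairs of lines with a hyperplane, or equivalently use the fact that the supporting object is a projective system whose dual/residual codes inherit divisibility from the line structure.

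The main obstacle, and the step I expect to be the crux, is ruling out the ``forbidden'' even weights $78,82,84,86,88,90,98,102$ and pinning the survivors down to exactly $\{76,80,92,96,100\}$. I would approach this by a MacWilliams-type or counting argument: let $A_w$ denote the number of codewords of weight $w$ and set up the standard power-moment identities (the first few MacWilliams identities for a $[153,7]_2$-code) together with the constraint that all weights lie in the even set above. The divisibility restrictions coming from the $51$-line representation should be encoded as congruence conditions on the $A_w$ or on the hyperplane-type distribution $(a_j)_{0\le j\le 13}$ counting hyperplanes containing exactly $j$ lines. Solving, or at least extracting congruence obstructions from, this linear system — much in the spirit of the Lemma~\ref{lemma_ILP} setup with $q=2$, $\Delta=4$ on the admissible blocks — should eliminate the spurious weights. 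The delicate part is verifying that no integral nonnegative solution of the moment equations can place positive mass on any of the excluded weights; if a purely arithmetic argument does not close all cases, the fallback is a direct finite check over the possible line-hyperplane intersection patterns, which is exactly the kind of bounded enumeration the paper's framework is designed to handle.
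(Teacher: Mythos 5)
Your proposal has two genuine gaps. First, you open by invoking the fact that $\cM$ is a union of $51$ lines of $\PG(6,2)$, but that is not part of the hypothesis: the lemma is stated for an \emph{arbitrary} $[153,7,76]_2$-code with all weights at most $102$, and this generality is essential, because the subsequent proposition enumerates all codes satisfying these weight restrictions and only afterwards checks which of them can be partitioned into lines. Under your extra assumption you would at best prove a weaker statement that could not be fed into that classification. Moreover, even granting the line structure, your parity count $\cM(H)=51+2j$ only yields $2$-divisibility of the weights. The paper's route is different and assumption-free: $153=76+38+19+10+5+3+2$, so $C$ is a Griesmer code, and since $4$ divides $d=76$, Ward's divisibility theorem \cite{ward1998divisibility} gives $4$-divisibility outright, cutting the candidate weights to $\{76,80,84,88,92,96,100\}$ without any appeal to lines.

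Second, the crux --- eliminating $84$ and $88$ --- is exactly the part you leave as a hope (``should eliminate the spurious weights''), and the tools you name cannot close it. The paper excludes weight $84$ by a residual-code argument: the residual code of a weight-$84$ codeword would be a $[69,6,34]_2$-code, which violates the Griesmer bound since $34+17+9+5+3+2=70>69$. Weight $88$ is genuinely harder: its residual would be a $[65,6,32]_2$-code, and such codes \emph{do exist} (the paper notes one can add two points to the unique projective $32$-divisible $[63,6,32]_2$-code), so no MacWilliams/power-moment or congruence argument on weight distributions alone can rule this case out. What is needed is the extra structure that $C$, being a binary Griesmer code, has maximum point multiplicity $\left\lceil 76/2^{6}\right\rceil=2$ and is $4$-divisible, so the residual code is $2$-divisible with maximum point multiplicity $2$; the nonexistence of such a $[65,6,32]_2$-code is established in the paper by a separate exhaustive computer enumeration, not by counting identities. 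Your fallback of ``a direct finite check over line-hyperplane intersection patterns'' is not a substitute for that enumeration: it again presupposes the line structure, and it is not specified at a level where it could actually be carried out.
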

\begin{proof}
  Note that $C$ is a Griesmer code and that $4$ divides the minimum distance $76$, so that we can conclude that $C$ is $4$-divisible \cite{ward1998divisibility}. 
  The residual code of a codeword of weight $84$ would be a $[69,6,34]_2$-code which does not exist. Now let $C'$ be the residual code of a codeword of 
  weight $88$, so that $C'$ is $[65,6,32]_2$-code. Since $C$ is $4$-divisible and has maximum point multiplicity $2$, $C'$ has to be $2$-divisible with maximum point 
  multiplicity $2$. However, we have just shown that such a code does not exist.\hfill{$\square$}    
\end{proof}

\begin{proposition}
  There are exactly two non-isomorphic $[153,7,76]_2$-codes $C$ with $\operatorname{wt}(c)\le 102$ for every $c\in C$.
\end{proposition}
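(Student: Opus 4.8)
The plan is to settle the classification with the same lattice-point-enumeration-plus-ILP extension machinery used in the preceding results, driven by the weight information already established. First I would fix the geometric picture: such a $C$ corresponds to a multiset $\cM$ of points in $\PG(6,2)$ with $\#\cM=153$ and maximum point multiplicity $2$, and the bounds $76\le\operatorname{wt}(c)\le 102$ read as $51\le\cM(H)\le 77$ for every hyperplane $H\in\cH_7$. By Lemma~\ref{weight_restrictions} the non-zero weights lie in $\{76,80,92,96,100\}$, so $C$ is $4$-divisible and $\cM(H)\in\{53,57,61,73,77\}$. The decisive feature is the gap in the spectrum (the weights $84$ and $88$ are forbidden): writing the admissible weights as $\{19\cdot 4,20\cdot 4\}\cup\{23\cdot 4,24\cdot 4,25\cdot 4\}$ places us in the two-block situation of (\ref{weight_spectrum}) with $\Delta_1=\Delta_2=4$, $a_1=19,b_1=20$, $a_2=23,b_2=25$, so that the refined constraints (\ref{ie_add_first})--(\ref{ie_add_last}) are available.

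Before enumerating I would exploit two structural facts that tame the recursion. On the one hand, projecting $\cM$ through a point $P$ with $\cM(P)=r\ge 1$ leaves every codeword weight unchanged, since each hyperplane through $P$ satisfies $\cM(H)\ge r$ and the projected hyperplane multiplicity is $\cM(H)-r$; hence the weight set $\{76,80,92,96,100\}$ is inherited by every code arising on the way up. On the other hand, $[153,7,76]_2$ meets the Griesmer bound with equality, and in dimension $6$ the Griesmer bound already forces length at least $158$ once the minimum distance is $80$; since our lengths never exceed $153$, every relevant dimension-$6$ code must have minimum distance exactly $76$ and length in $\{151,152\}$. Running the analogous Griesmer estimate at each dimension pins the admissible lengths into a narrow window, so the family of intermediate codes to be classified is finite and comparatively small.

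With these constraints in place I would classify the codes by iterated extension in the sense of (\ref{eq_generator_matrix}), starting from low dimension and repeatedly applying Lemma~\ref{lemma_ILP} to pass from dimension $j$ to dimension $j+1$ until reaching $[153,7,76]_2$. At every step the point-multiplicity bound $\Lambda_P=2$ and the canonical-extension condition (\ref{ie_min_extension}) are imposed, the lattice points are produced in Phase~1, and isomorphic copies are discarded by a canonical-form test so that exactly one representative of each class survives. Crucially, each Phase~1 run is preceded by the Phase~0 feasibility ILP built from (\ref{eq_c_sum})--(\ref{point_var}) together with the gap-aware constraints (\ref{ie_add_first})--(\ref{ie_add_last}) (and, when $r\ge 2$, the linearized minimality (\ref{ie_min_extension_3})); exploiting $4$-divisibility and the spectral gap, this discards the intermediate codes that admit no valid continuation. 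The final extension into dimension $7$ should return precisely two codes; to certify the count I would confirm that the two survivors are genuinely non-isomorphic by comparing isomorphism invariants such as their weight enumerators or the orders of their automorphism groups.

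The main obstacle I anticipate is the combinatorial size of the intermediate layers, typically dimensions $5$ and $6$, where the number of partial codes passing the purely local Phase~1/Phase~2 filters can be enormous, as the length-window and divisibility constraints alone still leave many candidates. The whole point of moving the divisibility-with-gaps and minimality tests into the Phase~0 ILP is to reject unextendable branches before the expensive enumeration, and making this pruning strong enough that the dimension-$6$ layer stays within reach is what decides whether the computation terminates in practice; a secondary but essential difficulty is guaranteeing that the isomorph rejection is simultaneously sound and complete, since any flaw there would merge or split classes and falsify the final count of two.
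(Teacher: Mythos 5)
Your overall strategy --- iterated projection/extension via Lemma~\ref{lemma_ILP}, the weight restrictions of Lemma~\ref{weight_restrictions}, and a gap-aware \texttt{Phase 0} ILP --- is exactly the paper's route (the paper classifies all $[\le 150,5]_2$-codes with these weight restrictions, extends them to $[151,6]_2$-codes, and then to dimension $7$). But there is one genuine flaw that would make your computation fail: you impose the point-multiplicity bound $\Lambda_P=2$ \emph{at every step} of the recursion. Only the final $[153,7]_2$-code has maximum point multiplicity $2$; its projections do not. Projection through a point $P$ adds multiplicities along the lines through $P$: for $q=2$ each quotient point gets multiplicity $\cM'(L)-\cM'(P)=\sum_{Q\in L,\,Q\neq P}\cM'(Q)\le 2+2=4$. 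Moreover the violation of your bound is unavoidable, not incidental: a multiset of size $151$ in $\PG(5,2)$ (which has only $63$ points) has average multiplicity $151/63>2$, so \emph{every} dimension-$6$ projection has a point of multiplicity at least $3$, and the dimension-$5$ codes of length $149$ or $150$ live in $\PG(4,2)$ with only $31$ points, forcing a point of multiplicity at least $\lceil 150/31\rceil=5$. With $\Lambda_P=2$ your systems at dimensions $5$ and $6$ are infeasible, the intermediate layers come out empty, and your algorithm would wrongly report that no such $[153,7,76]_2$-code exists. The paper is explicit on this point: the two $[151,6]_2$-codes are classified \emph{with maximum point multiplicity $4$}, and the multiplicity-$2$ property is used only once, for the target code itself --- since $153>127$, the dimension-$7$ code must have a point of multiplicity exactly $2$, hence it is an $r=2$ extension of one of those $[151,6]_2$-codes.

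A secondary difference: because the paper projects the target through a point of multiplicity exactly $2$, it never needs the length-$152$ layer in dimension $6$ that your canonical-extension convention would require; that part of your plan is not wrong, it merely does strictly more work than necessary.
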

\begin{proof}
  We have exhaustively enumerated all $[\le 150,5]_2$-codes with weight restrictions as in Lemma~\ref{weight_restrictions}. There are $5$ such codes with length $149$ and $27$ such 
  codes with length $150$. Extending them to $[151,6]_2$-codes with weight restrictions as in Lemma~\ref{weight_restrictions} leaves just two possibilities (with maximum point 
  multiplicity $4$). 
  Noting that $C$ has maximum point multiplicity $2$, it has to be an extension of such a $[151,6]_2$-code and we have computationally verified the stated results. 
  Those two codes have weight distributions $76^{107} 80^{15} 92^{5}$ and $76^{108} 80^{14} 92^{4} 96^1$. The automorphism groups have orders $16128$ and $32256$, respectively.\hfill{$\square$}       
\end{proof}

A generator matrix of the $[153,7,\{76,80,92\}]_2$-code with an automorphism group of order $16128$ is given by the concatenation of 
$$
\left(
\begin{smallmatrix}
1111111111111111111111111111111111111111111111111111111111111111111111111111\\
0000000000000000000000000000000000000001111111111111111111111111111111111111\\
0000000000000000000111111111111111111110000000000000000000011111111111111111\\
0000000000011111111000000000000111111110000000000001111111100000000000011111\\
0000011111100001111000000111111000011110000001111110000111100000011111100001\\
0011100011100110011000111000111001100110001110001110011001100011100011100110\\
0100100101101010100011011011001010101110110010110110111010100101100100101000
\end{smallmatrix}\right)
$$
and
$$
\left(
\begin{smallmatrix}
11100000000000000000000000000000000000000000000000000000000000000000010000000\\ 
11111111111111111111111111111111111111111100000000000000000000000000001000000\\
11100000000000000000001111111111111111111111111111111111111110000000000100000\\
11100000001111111111110000000011111111111100000001111111111111111111000010000\\
11100011110000001111110000111100000011111100011110000001111110001111100001000\\
01101100110001110001110011001100011100011101100110001110001110110011100000100\\
10110100010010010110010101110101101100101100101010110010010011010111100000011
\end{smallmatrix}\right)\!\!.
$$
The corresponding multisets of points can be partitioned into $51$ lines. We remark that there is also 
an $[153,7,76]_2$-code with weight distribution $76^{108} 80^{15} 92^3 108^1$ which cannot be partitioned in such a way due to the codeword of weight $108$. 
Without excluding the weights $84$ and $88$ the intermediate codes would be too numerous to perform an exhaustive enumeration in reasonable time. 


\begin{thebibliography}{10}
\providecommand{\url}[1]{\texttt{#1}}
\providecommand{\urlprefix}{URL }
\providecommand{\doi}[1]{https://doi.org/#1}

\bibitem{aardal2000solving}
Aardal, K., Hurkens, C.A., Lenstra, A.K.: Solving a system of linear
  {D}iophantine equations with lower and upper bounds on the variables.
  Mathematics of Operations Research  \textbf{25}(3),  427--442 (2000)

\bibitem{aardal2010lattice}
Aardal, K., Wolsey, L.A.: Lattice based extended formulations for integer
  linear equality systems. Mathematical Programming  \textbf{121}(2),  337--352
  (2010)

\bibitem{betten2006error}
Betten, A., Braun, M., Fripertinger, H., Kerber, A., Kohnert, A., Wassermann,
  A.: Error-correcting linear codes: Classification by isometry and
  applications, vol.~18. Springer Science \& Business Media (2006)

\bibitem{blokhuis2004small}
Blokhuis, A., Brouwer, A.E.: Small additive quaternary codes. European Journal
  of Combinatorics  \textbf{25}(2),  161--167 (2004)

\bibitem{bouyukliev2021computer}
Bouyukliev, I., Bouyuklieva, S., Kurz, S.: Computer classification of linear
  codes. IEEE Transactions on Information Theory  \textbf{67}(12),  7807--7814
  (2021)

\bibitem{brouwer2021two}
Brouwer, A.E.: Two-weight codes. In: Concise Encyclopedia of Coding Theory, pp.
  449--462. Chapman and Hall/CRC (2021)

\bibitem{calderbank1986geometry}
Calderbank, R., Kantor, W.M.: The geometry of two-weight codes. Bulletin of the
  London Mathematical Society  \textbf{18}(2),  97--122 (1986)

\bibitem{delsarte1972weights}
Delsarte, P.: Weights of linear codes and strongly regular normed spaces.
  Discrete Mathematics  \textbf{3}(1-3),  47--64 (1972)

\bibitem{dodunekov1998codes}
Dodunekov, S., Simonis, J.: Codes and projective multisets. The Electronic
  Journal of Combinatorics  \textbf{5},  1--23 (1998)

\bibitem{guan2023some}
Guan, C., Li, R., Liu, Y., Ma, Z.: Some quaternary additive codes outperform
  linear counterparts. IEEE Transactions on Information Theory  (2023)

\bibitem{hill1973largest}
Hill, R.: On the largest size of cap in ${S}_{5,3}$. Atti della Accademia
  Nazionale dei Lincei. Classe di Scienze Fisiche, Matematiche e Naturali.
  Rendiconti  \textbf{54}(3),  378--384 (1973)

\bibitem{honold2018partial}
Honold, T., Kiermaier, M., Kurz, S.: Partial spreads and vector space
  partitions. In: Greferath, M., Pav{\v{c}}evi{\'c}, M.O., Silberstein, N.,
  V{\'a}zquez-Castro, M.{\'A}. (eds.) Network Coding and Subspace Designs, pp.
  131--170. Springer (2018)

\bibitem{honold2019lengths}
Honold, T., Kiermaier, M., Kurz, S., Wassermann, A.: The lengths of projective
  triply-even binary codes. IEEE Transactions on Information Theory
  \textbf{66}(5),  2713--2716 (2019)

\bibitem{jaffe2000optimal}
Jaffe, D.B.: Optimal binary linear codes of length $\le 30$. Discrete
  Mathematics  \textbf{223}(1-3),  135--155 (2000)

\bibitem{kaski2006classification}
Kaski, P., {\"O}sterg{\aa}rd, P.R.: Classification algorithms for codes and
  designs, vol.~15. Springer (2006)

\bibitem{kurz2021divisible}
Kurz, S.: Divisible codes. arXiv preprint 2112.11763  (2021)

\bibitem{kurz2024non}
Kurz, S.: Non-projective two-weight codes. Entropy  (to appear), arXiv preprint
  2402.16643

\bibitem{ostergaard2002classifying}
{\"O}sterg{\aa}rd, P.R.: Classifying subspaces of {H}amming spaces. Designs,
  Codes and Cryptography  \textbf{27},  297--305 (2002)

\bibitem{schnorr1994lattice}
Schnorr, C.P., Euchner, M.: Lattice basis reduction: {I}mproved practical
  algorithms and solving subset sum problems. Mathematical Programming
  \textbf{66},  181--199 (1994)

\bibitem{schrijver1998theory}
Schrijver, A.: Theory of linear and integer programming. John Wiley \& Sons
  (1998)

\bibitem{slepian1960some}
Slepian, D.: Some further theory of group codes. Bell System Technical Journal
  \textbf{39}(5),  1219--1252 (1960)

\bibitem{ward1998divisibility}
Ward, H.N.: Divisibility of codes meeting the {G}riesmer bound. Journal of
  Combinatorial Theory, Series A  \textbf{83}(1),  79--93 (1998)

\end{thebibliography}
\end{document}